\def\np{\bf NP}
\def\ds{\displaystyle}
\newcommand{\soph}[1]{\mbox{soph}_{#1}}
\newcommand{\nstoch}[1]{\mbox{nstoch}_{#1}}
\newcommand{\ld}[1]{\mbox{depth}_{#1}}
\newcommand{\halttime}{\mbox{time}}
\newcommand{\ovl}[1]{\overline{#1}}
\newcommand{\Dom}[1]{\mbox{Dom}{#1}}
\newcounter{oldTheorem}
\begin{document}

\title{Sophistication vs Logical Depth
}


\author{Lu\'is Antunes         \and
        Bruno Bauwens \and
        Andr\'e Souto \and
        Andreia Teixeira
}

\authorrunning{L. Antunes et al.} 

\institute{Lu\'is Antunes \at
              Faculdade de Ciências da Universidade do Porto\\
              SQIG-Instituto de Telecomunica\c{c}\~oes \\
              R.Campo Alegre,1021/1055 \\ 
              4169 - 007 Porto - Portugal \\
              Tel.: +351 220 402 929\\
              Fax: +351 220 402 950 \\
              \email{lfa@dcc.fc.up.pt}           
           \and
           Bruno Bauwens \at
	   Faculty of Computer Science - Research University Higher School of Economics,
	   Kochnovskiy Proezd 3, 125319 Moscow
           \and
           Andr\'e Souto \at
              Departamento de Matemática - Instituto Superior Técnico - Universidade de Lisboa and SQIG-Instituto de Telecomunica\c{c}\~oes
           \and
           Andreia Teixeira \at
              DCC-FCUP and SQIG-Instituto de Telecomunica\c{c}\~oes
}

\date{Received: date / Accepted: date}

\maketitle

\begin{abstract}
Sophistication and logical depth are two measures that express how complicated the structure in a string is. Sophistication is defined as the minimal complexity of a computable function that defines a two-part description for the string that is shortest within some precision; the second can be defined as the minimal computation time of a program that is shortest within some precision.

We show that the Busy Beaver function of the sophistication of a string exceeds its logical depth
with logarithmically bigger precision, and that logical depth exceeds the Busy Beaver function of
sophistication with logarithmically bigger precision. 
  We also show that sophistication is unstable in its precision: constant variations can change its value by a linear term in the length of the string.

\keywords{Sophistication \and logical depth \and Kolmogorov complexity \and algorithmic sufficient statistics \and Busy Beaver function.}
\end{abstract}

Solomonoff~\cite{Sol64}, Kolmogorov~\cite{kol65} and Chaitin~\cite{Cha66} independently defined a
measure of information contained in a bit string~$x$ as the length of a shortest program that
produces~$x$ on a universal Turing machine.  This measure, usually represented by~$C(x)$, is called
Kolmogorov complexity.  
Kolmogorov complexity does not express whether the string contains sophisticated structure.  For
example, consider for some $n$ a randomly generated $n$-bit string.  With high probability the
complexity is about $n$ and the string has no (complicated) structure.  On the other hand, the
$(2^n-1)$-bit string representing the Halting problem for programs of length less than $n$ has also
complexity close to~$n$ but has very complicated structure.  Informally, ``sophistication of
structure'' can be measured by the minimal computation time of a program modeling the structure or
by the minimal size of a program that models the structure.

The first notion is Bennett's \em logical depth\em~\cite{ben88}. At significance level~$c$, it is defined as the minimal 
time to compute $x$ by a program $p$ that is $c$-incompressible on a universal prefix-free Turing
 machine $U$ (of some type), {\em i.e.} $C_U(p) \ge |p| - c$.  Bennett~\cite{ben88} showed that this measure is
closely related to the minimal time for which some time-bounded version of algorithmic probability
converges within a factor $2^{-c}$. We will use the following simpler variant (which is closely
related with the previous one, see Section \ref{sec:prel}):

\begin{quotation} 
{\it ``The time required to compute $x$ by a program no more than $c$~bits longer than a shortest program.''}
\end{quotation}

Examples of strings that are non-deep according to this definition are the random strings and the efficiently computable ones.
In~\cite{afmv06}, this notion was used to show that if the complexity class $\np$ reduces to a sequence for which every initial segment is not deep, up to ``polylog'' precision in the length of the string, then the polynomial time hierarchy collapses. In particular, it would imply a collapse if $\np$ reduces to a sparse or to a random set.

Koppel~\cite{koppel87} defined a different notion of depth for infinite sequences based on some variant of monotone Kolmogorov
complexity.  The class of deep sequences is defined by the ones for which the depth of initial
segments is not bounded by a computable function of their length.  In particular, the set of such
sequences is disjoint from the set of random ones, and hence, they define a set of measure zero.
Lutz~\cite{DepthAndReducibility} showed that deep sequences contain useful information in the
following computational sense: the class of sequences that truth-table reduces to them has non-zero
measure in the class of computable sequences.

Kolmogorov~\cite{KolmogorovTallinn,KolmogorovAbstract} defined for each string the notion of structure function dividing a shortest program for a string in two parts -- one part accounting for useful regularities and another accounting for the remaining information presented in the string -- in such a way that this two-part description is as small as the shortest one-part description. 
He represented the regularities in the string by finite sets. Later, Koppel~\cite{koppel87,koppel88,koppel91} expressed regularities as monotone computable functions and called the minimal complexity of the function defining a shortest two-part code {\it sophistication}. Following Koppel's work, Li and Vit\'anyi~\cite[p. 100]{livi} and independently Antunes and Fortnow~\cite{soph} revisited the notion of sophistication considering computable functions (that are not necessarily monotone).
It was observed that there are strings with near maximum sophistication, and such strings encode the halting problem for smaller programs.
Furthermore, in~\cite{soph}  coarse sophistication was introduced, and it was shown
that it is roughly equivalent to a variation of logical depth 
based on the Busy Beaver function. In Section~\ref{sec:overview}, we present a more detailed overview of the literature on these measures of sophistication. %

Sophistication and logical depth are conceptually very different since the former measures program lengths while the latter running times. In order to establish a relationship between these measures, we rescale logical depth from running time to program length using the Busy Beaver function. In this scenario, we prove that, up to logarithmic changes of the significance of both measures, they are equal up to logarithmic terms. 
From this, we conclude that all sophistication measures defined using Kolmogorov complexity, are equivalent in this sense.
A closely related result was previously shown in~\cite[Theorems 3.1.21 and 3.3.3]{bruno}. 
Although, using a very technical but closely related scaling function based on the convergence time of Chaitin Omega numbers.
We also study the stability of sophistication under changes of significance. 
From \cite[Theorem IV.4]{vv04}, one concludes that a logarithmic change of the significance can change sophistication maximally ({\em i.e.} almost $|x|$).
We show this also holds for constant changes of the significance.

\section{Definitions and results}\label{sec:prel}

For a string $x$, let $|x|$ be the length of $x$. 
For each Turing machine, we associate a partial function $U$ that maps pairs of strings to strings. 
We fix a reference Turing machine $U$ that is universal in the following sense: for any other
machine $V$, there is a string $w_V$ such that $U(w_Vp,y) = V(p,y)$ if $V(p,y)$ is defined. 
If $y$ is the empty string we write $U(p)$ rather than $U(p,y)$.

The {\em Kolmogorov complexity} of $x$ is defined as 
\[ 
  C(x)= \min_{p}\{|p| : U(p)= x \}.
\]
Note that changing the universal machine $U$ affects Kolmogorov complexity  by less than an
additive constant.

\bigskip

Koppel~\cite{koppel87}, using monotone functions as a model, defined sophistication for infinite
strings. Later, Li and Vit\'anyi~\cite{LiVi97} and Antunes and Fortnow~\cite{soph} independently
simplified Koppel's definition of sophistication for finite strings, using computable functions
(that are not necessarily monotone). 

\begin{definition}[as in \cite{soph}]\label{AF}
  Let $c$ be integer. 
The \em sophistication \em of a string $x$ with significance $c$  is:
\[
\soph{c}(x)=\min_p\left\{|p|: \begin{array}{l}\mbox{$U(p,d)$ is defined for all $d$}\\ \mbox{ and there is a~$d$ s.t. } U(p,d)=x
  \\ \mbox{ and }|p|+|d|\leq C(x)+c\end{array}\right\} .
  \] 
  If no such $p$ exists, then $\soph{c}(x) = +\infty$.
\end{definition}

Clearly, sophistication is non-increasing in $c$. For $c = |x| + O(1)$, sophistication is bounded by
$O(\log |x|)$. It might happen that sophistication is finite for negative $c$, however one can show
that finite sophistication implies $-c \le O(\log |x|)$. 

\bigskip

Bennett~\cite{ben88} defined the $c$-significant logical depth of an object $x$ as the time required
by a prefix-free machine to generate $x$ with a program $p$ that is $c$-incompressible ({\em i.e.} 
$K(p) \ge |p|-c$, where $K$ stands for the complexity on a universal prefix-free machine).  
Our results are related to a more intuitive version of logical depth (also
discussed in~\cite{ben88}). 
In Appendix~\ref{sec:bbdepthInvariant}, we explain why this notion is sufficiently close
to Bennett's definition.
Let $\halttime(p)$ be the number of computation steps made by $U$ on input $x$ to reach a halting state.

\begin{definition}[Logical depth]\label{logicaldepth}
For any $c \ge 0$, the \em logical depth \em of a string $x$ at significance level $c$ is
\[
\ld{c}(x) = \min\left\{\halttime(p) : |p| \leq C(x) + c \textnormal{ and } U(p) = x\right\} .
\] 
\end{definition}
Note that depth is always finite (for $c\ge 0$). For $c<0$ let $\ld{c}(x) = +\infty$.
One can, scale down the running time to program length using the inverse Busy Beaver function 
\[bb(n) = \min\left\{|p|:U(p) \mbox{ halts and } \halttime(p) \ge n\right\}.\] 
The Busy Beaver logical depth is simply the inverse Busy Beaver function of the 
logical depth. By definition, this equals the minimal complexity of an
upper bound of the logical depth.

\begin{definition}
  The \em Busy Beaver logical depth \em of $x$ with significance $c$ is:
  \begin{eqnarray*}
    \ld{c}^{bb}(x) &=& bb(\ld{c}(x)) \\
    &=& \min_{p,q}\left\{|q|:\begin{array}{l} |p| \leq C(x) + c \mbox{ and } U(p)=x\\ \mbox{and } \halttime(p) \le \halttime(q)\end{array}\right\}
  \end{eqnarray*}
\end{definition}
From the definition it is easy to see that $\ld{c}^{bb}(x) \le C(x) \le |x| + O(1)$. 
Clearly, $\ld{c}(x)$ is non-increasing in $c$. 
For some machines $U$ we have $\ld{c}(x) \ge |x|$ for all $x$, for example, if every halting program 
on $U$ always scans the full input. The following lemma shows that changing two such machines changes 
the Busy Beaver logical depth to a function that is close.

\begin{lemma}\label{lem:depthIsInvariant}
  For all universal\footnote{In fact the proof only requires that $U$ and $V$ are
    optimal, {\em i.e.} for all machines $W$ there exist $c_W$ 
    such that $C_U(x) \le C_W(x) + c_W$ and similarly for $V$.} 
  Turing machines $U$ and $V$, there exist a constant~$c'$ 
  such that for all $c$ and $x$: $\ld{c,U}(x) \ge |x|$ [no Busy Beaver here!] implies
  \[\ld{c+c',V}^{bb}(x) \le \ld{c,U}^{bb}(x) + c'.\]
\end{lemma}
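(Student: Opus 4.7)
My plan is to transport a witnessing pair $(p,q)$ for the $U$-side Busy Beaver depth to the machine $V$ by prefixing a constant-length simulator, and then to absorb the simulation overhead into a constant additive term on the length of $q'$.

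First I would invoke the universality of $U$ and $V$ to fix constant-length strings $w,w'$ such that $V(w r)=U(r)$ whenever $U(r)$ is defined and $U(w' s)=V(s)$ whenever $V(s)$ is defined; from the latter one obtains $C_V(x)\le C_U(x)+|w'|+O(1)$. Given a pair $(p,q)$ that realises $\ld{c,U}^{bb}(x)=|q|$, I would take $p'=wp$ as the $V$-program producing $x$. Then $V(p')=U(p)=x$ and
\[
|p'|\;=\;|p|+|w|\;\le\;C_U(x)+c+|w|\;\le\;C_V(x)+c+(|w|+|w'|+O(1)),
\]
so the length condition on $p'$ is met once $c'$ exceeds $|w|+|w'|+O(1)$.

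The main obstacle is the time condition $\halttime_V(p')\le\halttime_V(q')$, because the simulation prefix $w$ blows up the running time by some constant factor $c_{sim}=c_{sim}(U,V)$: one has $\halttime_V(ws)\le c_{sim}\cdot\halttime_U(s)$ whenever $\halttime_U(s)\ge|s|$, and conversely $\halttime_V(ws)\ge\halttime_U(s)$ always. Here is where I would use the hypothesis $\ld{c,U}(x)\ge|x|$: since $|p|\le C_U(x)+c\le|x|+c+O(1)\le\halttime_U(p)+c+O(1)$, a harmless adjustment of $w$ brings us into the regime where $\halttime_V(p')\le c_{sim}\cdot\halttime_U(p)$. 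Thus I need a $V$-program $q'$ of length at most $|q|+O(1)$ whose halting time is at least $c_{sim}\cdot\halttime_U(p)$.

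To build such a $q'$, I would harvest $q$ itself but wrap it in a constant-size $V$-prefix $w_{\text{pad}}$ that simulates $U$ on its suffix $c_{sim}$ times in sequence (since $c_{sim}$ is a constant depending only on $U$ and $V$, it can be hard-coded into $w_{\text{pad}}$). Then $V(w_{\text{pad}}q)$ halts (each simulation of $U(q)$ halts), runs for at least $c_{sim}\cdot\halttime_U(q)\ge c_{sim}\cdot\halttime_U(p)\ge\halttime_V(p')$ steps, and has length $|w_{\text{pad}}|+|q|=|q|+O(1)$. Taking $c'$ to be the maximum of $|w|+|w'|+O(1)$ and $|w_{\text{pad}}|+O(1)$ simultaneously secures $|p'|\le C_V(x)+c+c'$ and $|q'|\le|q|+c'=\ld{c,U}^{bb}(x)+c'$, which, together with the witness for $V$, gives the desired inequality. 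The delicate point to be careful about is making the multiplicative slowdown $c_{sim}$ genuinely a constant of the machines (not growing with $|p|$), which is exactly what the assumption $\halttime_U(p)\ge|x|$ secures.
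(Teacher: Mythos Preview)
Your argument hinges on the claim that simulating $U$ on $V$ via a fixed prefix $w$ incurs only a constant multiplicative time overhead: $\halttime_V(ws)\le c_{sim}\cdot\halttime_U(s)$ once $\halttime_U(s)\ge|s|$. This is not true in general. The lemma is stated for arbitrary universal (indeed, merely optimal) machines $U$ and $V$, and universality/optimality are purely extensional conditions --- they constrain outputs, not running times. For single-tape machines the standard prefix-based simulation already has polynomial (typically quadratic) overhead, since each simulated step may require scanning back to $w$ to look up the transition; nothing forces an arbitrary universal $V$ to admit a linear-time simulator of an arbitrary $U$. The hypothesis $\ld{c,U}(x)\ge|x|$ gives you $\halttime_U(p)\ge|x|$, but that cannot convert a super-linear overhead into a constant factor, and your ``harmless adjustment of $w$'' is left unspecified. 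Even if one grants linear overhead, the natural upper bound on $\halttime_V(p')$ carries an additive $|p|$ term, and since $|p|\le\halttime_U(p)+c+O(1)$ rather than $|p|\le\halttime_U(p)$, your $c_{sim}$-fold re-simulation of $q$ does not obviously dominate it without hard-coding $c$ into $q'$.

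The paper avoids any assumption on the shape of the simulation overhead. Instead of re-running $q$ a fixed number of times, it constructs a $V$-program $q'$ that, given $q$, \emph{computes} the needed time directly: determine $\halttime_U(q)$, enumerate the finitely many $U$-programs $r$ with $\halttime_U(r)\le\halttime_U(q)$ and length bounded in terms of $\halttime_U(q)$, run each $V(w_Vr)$ to completion, let $T$ be the maximum of those running times, and then print a string of length $T$. This forces $\halttime_V(q')\ge T\ge\halttime_V(w_Vp)$ no matter how slowly $V$ simulates $U$, while $|q'|=|q|+O(1)$. The hypothesis $\ld{c,U}(x)\ge|x|$ is used here not to tame an overhead constant but to ensure the relevant $p$ has length bounded by $\halttime_U(q)+O(1)$, so the enumeration is finite and includes it.
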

We postpone the proof of this lemma to the Appendix~\ref{sec:bbdepthInvariant}. 
Let the {\em upper graph} of a function $f$ be $\{(n,m): m \ge f(n)\}$.
Let the distance between two points $(n,m)$ and $(n',m')$ be $\max(|n-n'|, |m-m'|)$.
\begin{definition}\label{def:close}
Two functions $f$ and $g$ are  {\em $c$-close} if the upper graphs of these functions are in a $c$-neighbourhood of each other.
\end{definition}
If $f$ and $g$ are non-increasing, this is equivalent to $f(n+c)
\le g(n) + c$ and $g(n+c) \le f(n) + c$. The previous lemma shows that the depth function of 
all universal machines $U$ with $\ld{c,U}(x) \ge |x|$ are $O(1)$-close.

\bigskip

The first main result of the paper states that sophistication and logical depth are $O(\log |x|)$-close.

\begin{theorem} \label{th:ldepthVsSoph}
  For a fixed $x$, the functions $\ld{c}^{bb}(x)$ and $\soph{c}(x)$ are $O(\log |x|)$-close, {\em i.e.},  for some $e$ and
  for all $c$ and $x$ with $|x| \ge e$:
  \begin{eqnarray*}
     \ld{c + e\log |x|}^{bb}(x) &\le& \soph{c}(x)+e\log |x| \\
     \soph{c + e\log |x|}(x) &\le& \ld{c}^{bb}(x) + e\log |x|.
  \end{eqnarray*}
\end{theorem}
\newcounter{ldepthVsSoph}
\setcounter{ldepthVsSoph}{\value{theorem}}

In Theorem~\ref{th:ldepthDiffersSoph} in section~\ref{sec:ldepthDiffersSoph} it is shown 
that the margin in the significance cannot be made constant, and hence, depth and sophistication are not $O(1)$-close.
The second main result states that for a fixed string $x$, the sophistication function is unstable in its
significance; more precisely, for some $x$ and~$c$, small changes of $c$
can result in large changes of $\soph{c}(x)$.

\begin{theorem}\label{th:sophUnstable}
For some $e$ and for large $c$ there are infinitely many $x$ such that\footnote{
  For any $\varepsilon>0$, we can replace the term $\tfrac{3}{4}|x|$ by $(1-\varepsilon)|x|$ 
  if the significance of the second sophistication term is replaced by $c + O(\log (c/\varepsilon))$.
  }
\[
\soph{c}(x) - \soph{c+e\log c}(x) \ge  \frac{3}{4}|x|. 
\]
\end{theorem}
\newcounter{sophUnstable}
\setcounter{sophUnstable}{\value{theorem}}

This theorem shows that for a fixed string ``the sophistication of this string'' corresponds to a {\em function} (of $c$),
rather than a single number (in a similar way as Kolmogorov introduced the closely related
structure function, see  section~\ref{sec:overview}).

\section{Related definitions of sophistication} \label{sec:overview}

We describe related notions of sophistication and present a few definitions.
No definitions or results from this section are needed in later sections.
For a recent overview paper, we refer to~\cite{ShenVereshchaginAlgMinStat2015}.
\medskip

The first approach to define some notion of sophistication 
goes back to Kolmogorov~\cite{KolmogorovTallinn,KolmogorovAbstract}
(see~\cite{CGG89}) and uses the definition of a typical string in a set.
\begin{definition}
  A string $x$ is $c$-{\em typical} in a finite set $S$ containing $x$
  iff \[C(x|S) \ge \log |S| - c.\] 
\end{definition}
For such $S$, a literal representation of the lexicographic index of $x$ in $S$ 
(of length $\log |S| + O(1)$) is almost a shortest description 
for $x$ given $S$.
By a counting argument, one can show that all but at most a fraction $2^{-c}$ 
of elements in a set are $c$-typical.\footnote{
  On the other hand, any set must have 
  non-typical elements unless the set contains a lot of 
  mutual information with the Halting problem~\cite{EpsteinLevin}.}
Kolmogorov asked whether there exist strings that are not typical in any
finite set with small Kolmogorov complexity\footnote{
  The Kolmogorov complexity of a set is the length of 
  a shortest program that prints all its elements and halts.
  }.

In~\cite{Shen83,Vyu87,GTV01} a positive answer was shown, {\em i.e.},
some strings are only typical in sets of complexity close to the length of the string.
Kolmogorov called such strings absolutely non-stochastic, because they 
have high mutual information with the Halting problem. 
It is believed that such strings can not appear with high probability in a statistical experiment. 
We define the {\em non-stochasticity} of a string as the minimal complexity of a set in which 
the string is $c$-typical:
\begin{definition}\label{def:nonstochasticity}
$
  \nstoch{c}(x) = \min \left\{ C(S) : x \text{ is $c$-typical in } S \right\}.
$
\end{definition}

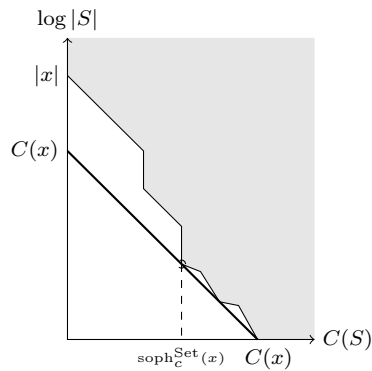
\begin{figure}
  \centering
  \begin{tikzpicture}[scale=0.5]
    \def\hght{8}
    \def\wdth{6.5}
    \newcommand{\someline}{(2,5) -- (2,4) -- (3,3) -- (3,2) -- (3.5,1.8) -- (4,1) -- (4.5, 0.9) -- (5,0)}
    \fill[gray!20] (0,\hght) -- (0,7) -- \someline -- (\wdth,0) -- (\wdth,\hght) -- cycle;  
    \draw[<->] (0,\hght) node[anchor=south] {$\log |S|$} -- (0,0) -- (\wdth,0) node[anchor=west] {$C(S)$} ;
    \draw (0,7) node[anchor=east] {$|x|$} -- \someline ;
    \draw[black,thick] (0,5) node[anchor=east,black] {\small{$C(x)$}} -- (5,0) node[anchor=north,black] {\quad \small{$C(x)$}};

   \draw[dashed] (3,2) circle (3pt) -- (3,0) node[anchor=north] {\tiny{$\soph{c}^{\text{Set}}(x)$}};
  \end{tikzpicture}
  \caption{
     The structure set of a string $x$ is the set of all pairs $(i,j)$ for which there exists 
     an $x$-containing set of complexity at most $i$ and cardinality at most $2^j$. Such a set is 
     schematically represented above in gray.
   }
  \label{fig:structureFunction}
\end{figure}

Kolmogorov also considered a more restrictive class of ``good'' set-models for a string $x$.
To understand this criterion, consider the  {\em structure set}, which is the set of 
all pairs $(i,j)$ for which there is an $x$-containing set 
of complexity at most $i$ and cardinality at most $2^j$, see Figure~\ref{fig:structureFunction}. 
Ignoring $O(\log |x|)$-terms, the set contains the points $(C(x),0)$ and $(0, |x|)$ 
witnessed by the set $\{x\}$ and the set of all strings of length $|x|$.
Note that if the set contains a point $(i,j)$, it also contains 
the points $(i+k, j-k)$ for all $k \le j$.\footnote{
  Partition the set in subsets of size at most $2^{j-k}$, this 
  increases the complexity of the $x$-containing set by at most~$k$.}
Hence, the lower border of the set, called structure function, decreases with at least slope one
(still ignoring $O(\log |x|)$ terms).
No point appears below the line $i+j = C(x)$, otherwise the corresponding set could be used to construct 
a program for $x$ of size less then $C(x)$.
Cover~\cite{Cover85} (see also~\cite[Sect. 14.12]{CoverElements} and~\cite[Sect 5.5.1]{livi}) 
mentioned explicitly the left-most place where the set approaches this line, 
which we call  {\em set sophistication} of~$x$: 
\begin{definition}\label{def:setsoph}
 $
\soph{c}^{\text{Set}}(x) = \min \{C(S) : x \in S \wedge C(S) + \log |S| \le C(x) + c\}.
$
\end{definition}


By the following theorem and lemma, sophistication, set-sophistication and non-stochasticity for a string $x$ are all $O(\log |x|)$-close.
\begin{theorem}[\cite{vv04}]\label{th:nonstochCloseToSetsoph}
  For all $x$, the functions $\nstoch{c}(x)$ and $\soph{c}^{\text{Set}}(x)$ are $O(\log |x|)$-close.
\end{theorem}

\begin{lemma}[\cite{Vit06}]\label{lem:setsophVsSoph}
  For all $x$, the functions $\soph{c}(x)$ and $\soph{c}^{\text{Set}}(x)$ are $O(\log |x|)$-close.
\end{lemma}

All these sophistication functions are unstable: increasing the parameter $c$ by $O(\log |x|)$, 
the function values can drop from maximal value $|x|-O(\log |x|)$ to $O(\log |x|)$.

\begin{corollary}[of Theorem IV.4 in \cite{vv04}]\label{cor:allshapes}
  There exists $e$ such that for all $c$ there exist infinitely many $x$ such that 
  \[
  \soph{c}^{\text{Set}}(x) \ge |x| - e\log |x| 
  \quad \text{and} \quad 
  \soph{c +e\log |x| }^{\text{Set}}(x) \le e\log |x| .\;\footnote{
Theorem IV.4 in~\cite{vv04} states that every decreasing function $f$ 
is $(C(f)+O(\log |x|))$-close to the function 
\[
\lambda_x(k) = \min \left\{ C(S) + \log |S|: S \ni x \wedge C(x) \le k \right\}
\]
of some string of length $f(0)$. 
(We use plain complexity in the definition of $\lambda_x$, because all results hold up
to $O(\log |x|)$ terms). 
For fixed $x$, the function $\lambda_x$ is the inverse of $\soph{x}^{\text{Set}}$.
}
  \]
\end{corollary}

\medskip

In~\cite{GTV01} it was shown that a sufficient set of almost minimal complexity of a string~$x$ can be computed from an initial segment of the binary code of the number of 
halting programs of length~$C(x)$. Hence, such a set contains high mutual information with the Halting problem for short programs
(see~\cite{ShenAlgorithmicStatistics}).
In~\cite{SSS,SSSdifferent} it is argued that in some cases this statistic can hardly be interpreted as a denoised version of~$x$. 
In fact, compared to $x$, a sufficient two part-code $(S,z)$ 
(where $z$ is the lexicographic index of $x$ in $S$) can contain different computational information from $x$, although $C(x|S,z)$ and 
$C(S,z|x)$ are both small. The proposed solution was to impose the existence of a 
computable bijection of small complexity between $x$ and $(S,z)$. 
This is equivalent to the requirement that there exists a short total program computing $S$ from $x$.
In~\cite{SSSdifferent,VereshchaginNewApproach} it was shown that this version
of sophistication can be much larger than set-sophistication. In fact, the
result shows that strings with large such sophistication can appear with
non-negligible probability in some statistical experiments.

\medskip
Until now, we considered two model types in the definitions of sophistication.
In Definition~\ref{AF}, we used computable functions 
that are $c$-{\em sufficient} for $x$, {\em i.e.}, functions $f$ for which a string $d$ exists
such that $f(d) = x$ and $C(f) + |d| \le C(x)+c$. In Definition~\ref{def:setsoph}, we considered
$c$-sufficient sets for $x$, {\em i.e.}, sets $S$ containing $x$ for which $C(S) + \log |S| \le C(x) + c$.
Another popular model type are
computable probability density functions~$P$. Such a $P$ is $c$-sufficient for $x$ if
$C(P) + \log (1/P(x)) \le C(x)+c$~\cite{GTV01}.\footnote{\label{fn:prefixplain}
  This probabilistic sufficiency criterion was defined in~\cite{GTV01} in terms of 
  prefix-free complexity, because $2^{-K (x|P)}$ defines a probability 
  distribution and hence, it is natural to compare it with $P(x)$. 
  Prefix complexity and plain complexity differ
  by at most $O(\log |x|)$~\cite{livi}, and this precision is sufficient for our discussion.}
In a similar way, probabilistic sophistication at 
significance level~$c$ is defined as the 
probability density function of minimal complexity that is $c$-sufficient.
By~\cite[Lemmas 7.1 and 7.2]{Vit06} all these variants of 
sophistication are $O(\log |x|)$-close.

In order to generalize the notion of sophistication for (infinite) sequences, 
Koppel~\cite{koppel87,koppel88,koppel91} considered monotone computable functions $f$ as models. 
The sufficiency criterion for the two-stage code for $x$ is the existence 
of a string $d$ such that $f(d)=x$ and $Km(f) + |d| \le H(x)+c$ where $H(x)$ is the 
minimal length of a two-part description for $x$ on some special monotone machine and $Km(x)$
denotes the monotone Kolmogorov complexity relatively to the machine considered.
It is not hard to show that $H(x) = C(x) + O(\log |x|)$ and that this notion of sophistication is $O(\log |x|)$-close to 
the aforementioned notions.\footnote{
 It is unclear whether $H(x) = Km(x) + O(1)$.
 }
On this model, Koppel defined sophistication and depth for sequences in two variants, 
and for each variant he showed that sophistication and depth are equal up to constants.


\bigskip

The last variant of sophistication is effective complexity~\cite{GML96,GML04}. 
This notion uses a probability density function $P$. 
Inspired by an information-theoretic solution of the problem of Maxwell's Demon, total entropy of
$P$ has been defined as $C(P) + H(P)$, where $H(P) = \sum_x P(x) \log_2 (1/P(x))$ denotes the Shannon
entropy of $P$.\footnote{The definition of total entropy used in~\cite{GML96,GML04} is $K(P) +
H(P)$. Notice that plain and prefix complexity are close ($|K(P) - C(P)| \le O(\log
C(P)$). See also footnote~\ref{fn:prefixplain}.} A probability density function $P$ is a $c$-{\em good}
model for $x$ if
$C(P) + H(P) \le C(x)+c$ and $\log (1/P(x)) \le H(P)+c$.\footnote{
  In fact, in~\cite{GML96} the precision 
  for which these inequalities should hold is not discussed. 
  Also, the authors suggest that the computation time of 
  a program for $P$ is bounded by some computable function.
  In~\cite{effCompVsDepth}
  the first requirement $c = \delta|x|$ is chosen for some $\delta>0$ and in the second 
  requirement a different parameter is chosen. 
  Furthermore, $P$ should be computable as a real function 
  and no restrictions on the computation time are considered. 
  Also, $K(P)$ was replaced by $K(P, H(P))$.} 
The $c$-effective complexity is the minimal complexity of a $c$-good model.
In~\cite[Lemma 21]{effCompVsDepth}, it is shown that effective complexity is 
$O(\log |x|)$-close to set-sophistication.\footnote{
  Indeed, if $P$ is $c$-good then it is $(2c)$-sufficient. 
  For the other direction, note that at most $2^{H(P)+c + 1}$ elements satisfy 
  $\log (1/P(x)) \le \lceil H(P) \rceil + c$,
  and these elements can be computed given $P$ and $\lceil H(x)\rceil \le C(x) + c \le |x|+c+O(1)$.
  Hence a $c$-good model defines a $(c+O(\log |x|))$-sufficient set.}
In~\cite[Theorem 18]{effCompVsDepth} it was also shown that strings with 
high effective complexity have high computational depth. Moreover, 
the proof shows that effective complexity is upper bounded by the Busy Beaver
logical depth with slightly bigger significance.
Our Theorem~\ref{th:ldepthVsSoph} implies also the other direction,  {\em i.e.}, 
that effective complexity is $O(\log |x|)$-close to Busy Beaver 
logical depth.

\section{Sophistication and Busy Beaver logical depth are close}\label{sec:relacoes}

Koppel~\cite{koppel87} proved an equivalence between logical depth and sophistication for infinite
sequences. 
For such sequences, and for fixed significance, depth is defined as the minimal complexity of a total function rather than the
minimal complexity of an upper bound for a number. 
In this section we show that sophistication and Busy Beaver logical depth of a string $x$ are $O(\log |x|)$-close functions.

\setcounter{oldTheorem}{\value{theorem}}
\setcounter{theorem}{\value{ldepthVsSoph}}
\addtocounter{theorem}{-1}
\begin{theorem}
  The functions $\ld{c}^{bb}(x)$ and $\soph{c}(x)$ are $O(\log |x|)$-close, {\em i.e.},  for some $e$ and
  for all $c$ and $x$ with $|x| \ge e$:
  \begin{eqnarray*}
     \ld{c + e\log |x|}^{bb}(x) &\le& \soph{c}(x)+e\log |x| \\
     \soph{c + e\log |x|}(x) &\le& \ld{c}^{bb}(x) + e\log |x|.
  \end{eqnarray*}
\end{theorem}
\setcounter{theorem}{\value{oldTheorem}}

In the Appendix we provide an alternative and more technical proof of this
result involving Chaitin $\Omega$-numbers that might be of interest for people
with background in the theory of algorithmic randomness.
\begin{proof} 
  To prove the first inequality, we assume $c \le |x| + O(1)$; otherwise the theorem
  follows directly. Consider $p$ and $d$ such that 
  \begin{enumerate}
    \item[$A1$.] the function $U(p,\cdot)$ is total,
    \item[$A2$.] $U(p,d) = x$,
    \item[$A3$.] $|p| + |d| \le C(x) + c$.
  \end{enumerate}
  For later use, note that by assumption on $c$ we have that $|p|$ and $|d|$ are bounded by $2|x| + O(1)$.
  We need to construct $q$ and $r$ such that
  \begin{enumerate}
    \item $U(q) = x$ and $|q| \le |p| + |d| + O(\log |x|)$, \label{cond:q}
    \item $\halttime(q) \le \halttime(r)$, \label{cond:time}
    \item $|r| \le |p| + c + O(\log |x|)$. \label{cond:length}
  \end{enumerate}

  \smallskip
  The idea of the construction is to let $r$ be a shortest program for the maximal 
  computation time needed to simulate $U(p,e)$ for some $e$ of length $|d|$. 
  Let us define this quantity more formally.

  {\em Construction of $q$.}
  For a string $y$, let $\ovl{y}$ be a computable prefix-free encoding of length $|y| +
  2\log |y|$. (For example $\ovl y = b_10b_20\dots b_{\log |y|}1y$ where $b$ is $|y|$ in binary.) 
  Let $V$ be a machine such that $V(\ovl{y}e) = U(y,e)$ if
  $U(y,e)$ is defined. Thus $U(w\ovl ye) = V(\ovl{y}e) = U(y,e)$ for some $w$ and all $y,e$. 
  Let $q = w\ovl{p}d$. Thus, $U(w\ovl pd) = U(p,d) = x$.
  Recall that $|p| \le 2|x| + O(1)$, hence,
  $|q|$ satisfies condition \ref{cond:q}:
  \[
   |q| \le O(1) + \left(|p| + O(\log |p|)\right) + |d| \le |p| + |d| + O(\log |x|)\,.
   \]

   {\em Construction of $r$.} 
  Let 
    $$
    \ds{t=\max_e \left\{ \halttime(w\ovl{p}e) : |e| = |d| \right\}}.
    $$ 
  The program $r$ is a shortest program printing a string containing $t$ zeros. Clearly, the
  running time of this program is at least~$t$ and by construction this exceeds $\halttime (q)\ge 
  \halttime (w\ovl pd)$, which verifies condition~\ref{cond:time}. For condition~\ref{cond:length}
  notice that to compute $t$, we only need to know $p$ and $|d| \le 2|x| + O(1)$, hence,
  \[
   |r| \le |\ovl p| + O(\log |d|) \le |p| + O(\log |x|).
   \]
   This concludes the proof of the first inequality.

  \medskip

  Now we prove the second inequality. For each $k,l$ such that $l \le k$ consider a sequence of strings and markers 
    $$ \ds{x_1, x_2, \dots, x_i, \Box, x_{i+1}, \dots, x_j, \Box, x_{j+1}, \dots} 
   $$ 
  that can be enumerated as follows: dovetail all programs of length $l$ and $k$, and enumerate the output of the
  $k$-bit programs in order of computation time. Each time a program of length $l$ halts,
  also append a marker to the series (if $k$-bit programs with the same computation time appear,
  append the marker last).
    One easily observes that:
    \begin{enumerate}
      \item\label{en:uniform} the sequence can be enumerated from $k,l$,
      \item \label{en:markers} there are at most $2^l$ markers, and at most $2^k$ strings, 
      \item if a program of length $k$ outputs $x$ in at most $BB(l)  = \max\left\{ \halttime(p) : |p| \le l \right\}$ steps, 
	then $x$ appears in the sequence before its last marker.
    \end{enumerate}

  \noindent
  The second inequality of the theorem follows from the following lemma.
  \begin{lemma}\label{claim}
  Every string that appears before the last marker in a sequence satisfying properties
  \ref{en:uniform} and \ref{en:markers} above, satisfies
  \[
    \soph{k - C(x) + O(\log k)}(x) \le l + O(\log k).
    \] 
  \end{lemma}
  We show that this lemma implies the inequality. 
  Assume $c \le |x| + O(1)$, otherwise the inequality holds for trivial reasons.
  Let $l = \ld{c}^{bb}(x)$. 
  There is a program $p$ for $x$ with $|p| \le C(x)+c$ 
  that runs in at most $BB(l)$ steps (and in more than $BB(l-1)$ steps). 
  Let $k = |p|$, thus $l \le k$ 
  (if $|p| < l$, the running time of $p$ would be at most $BB(l-1)$). 
  Enumerate a sequence as described above with parameters $l$ and $k$. 
  Notice that $x$ appears in the sequence before the last marker.  By the claim, we have
  \[
      \soph{c + O(\log k)}(x) \le l + O(\log k).
  \]
  The inequality follows from this and $k \le C(x) + c \le 2|x| + O(1)$.
%
\qed
\end{proof}
To complete the proof of Theorem \ref{th:ldepthVsSoph} we prove Lemma~\ref{claim}. 

\begin{proof}[of Lemma \ref{claim}]
  For any computable function $f$, let $C(f)$ denote the minimal length of a program that computes $f$.
  For any $x$ as in the Lemma,  we need to show that there is a computable function $f$ such that:
  \begin{enumerate}
    \item $C(f) \le l + O(\log k)$          \label{cond:claim_l}
    \item $C(f) + |d| \le k + O(\log k)$ for some $d \in f^{-1}(x)$.   \label{cond:claim_kd}
  \end{enumerate}

  Consider a segment of strings $x_{i+1}, \dots, x_j$ in the sequence, separated by two markers
  $\Box$ that contains $x$. We associate a function $f$ to this segment that maps 
  the lexicographic first $j-i$ strings to $x_{i+1},
  \dots, x_j$ and all other strings to the empty one.  Notice that  $f$ is computable, and can be
  computed from $k,l$ and the number of markers that precede the defining segment (which is at most $2^l$). 
  This implies $C(f) \le l + O(\log kl) = l + O(\log k)$, i.e., condition~\ref{cond:claim_l}.

  It remains to show condition~\ref{cond:claim_kd}.
  Let $\delta = \log (j-i)$, {\em i.e.} the logarithm of the size of the segment. 
  Observe that at most
  $2^{k-\delta}$ segments in the sequence have size at least $2^\delta$ (by assumption \ref{en:markers}).
  Hence, $C(f) \le k - \delta + O(\log kl\delta)$. 
  Since the segment contains $x$, there is a $d$ such that $f(d)=x$, 
  and by construction $|d| \le \delta$.  Hence 
  $C(f) + |d| \le (k - \delta) + \delta + O(\log (kl\delta))$, i.e. condition~\ref{cond:claim_kd}.
%
%
%
%
  \qed
\end{proof}

\section{Sophistication is unstable}\label{sec:sophunstable}

In~\cite{soph} the authors conjectured that Koppel's definition of sophistication might not be
stable, in the sense that small changes in the significance $c$ level could
drastically change the value of $\soph{c}(x)$.  To avoid this problem, they proposed a different
sophistication measure where they incorporated the significance level as a penalty in the formula
obtaining a robust measure, called coarse sophistication.  However, one can argue that this measure 
is not robust in the sense that drastic changes can happen for slight
changes of the weight of the penalty function~\cite{SSS}.

In Section~\ref{sec:overview}, we used \cite[Theorem IV.4]{vv04} to show that (most variants of)
sophistication are unstable if the significance is increased by $O(\log |x|)$. 
With the same proof technique, one can show that also
sophistication when defined with prefix complexity is unstable with constant changes of the
significance.

One might ask whether sophistication functions defined with plain complexity are also unstable with
constant changes in the precision? We provide a positive answer to this question.


\setcounter{oldTheorem}{\value{theorem}}
\setcounter{theorem}{\value{ldepthVsSoph}}
\addtocounter{theorem}{-1}
\begin{theorem}\label{th:sophUnstable}
For some $e$ and for large $c$ there are infinitely many $x$ such that
\[
\soph{c}(x) - \soph{c+e\log c}(x) \ge  \frac{3}{4}|x|. 
\]
\end{theorem}
\setcounter{theorem}{\value{oldTheorem}}

The proof also uses a technique inspired by the proof of~\cite[Theorem
IV.4]{vv04}. However, some technical difficulties appear because we are using
plain machines. Let us explain the problem.  In the definition of
sophistication of $x$, we consider pairs of strings $(p,d)$ such that $U(p,d) =
x$. For some $k$ there are $2^k$ strings of length $k$, but there are
$(k+1)2^k$ pairs $(p,d)$ with $|p|+|d|=k$. In~\cite{vv04} self-delimiting 
programs are used and the combinatorial part of the argument uses that the
amount of two-part codes of length $k$ is at most $2^k$.
In this paper we do not use self-delimiting machines, 
and therefore the combinatorial argument needs a bit more care.

\begin{lemma}\label{lem:twopartShorter}
 For some $c'$ and for all $k$ and $x$ such that $k + \log k \leq |x|$ we have
  \[
   \soph{|x|-C(x)-\log k + c'}(x) \le k + c'.
  \]
\end{lemma}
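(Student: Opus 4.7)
The plan is to construct, for each $x$ of length $n$ and every $k$ with $k+\log k\le n$, a total program $p$ of length at most $k+c'$ and a data string $d$ with $U(p,d)=x$ and $|p|+|d|\le n-\log k+c'$, where $c'$ is an absolute constant. The key intuition, already hinted at just before the lemma, is that a two-part code on $U$ is effectively $\log k$ bits shorter than any equivalent one-part encoding, because the position of the ``blank'' between $p$ and $d$ is visible to the machine through the two-argument structure, so we do not pay $\log k$ bits to self-delimit $p$ inside a combined string.

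To turn this intuition into a construction I would consider an auxiliary prefix-free one-part machine $V$ designed so that $V(\overline{k}\,p\,d)=U(p,d)$ for every $k$, every total $p$ of length exactly $k$, and every $d$, where $\overline{k}$ is a prefix-free encoding of the integer $k$ of length $\log k+O(\log\log k)$. For each admissible $k$, the $V$-programs starting with $\overline{k}$ form a universal family by construction, so there exists a $V$-program for $x$ of this form whose length is $K(x)+O(1)$. Stripping off the prefix $\overline{k}$ yields a two-part code $(p,d)$ on $U$ with $|p|=k$ and
\[
  |p|+|d|\;\le\;K(x)-\log k+O(\log\log k).
\]
Since $K(x)\le C(x)+O(\log n)\le n+O(\log n)$, the bound $|p|+|d|\le n-\log k+c'$ follows once the $O(\log n)+O(\log\log k)$ overheads are absorbed into the slack $n-C(x)$ that is already present in the claimed significance $n-C(x)-\log k+c'$.

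The main obstacle is to guarantee that the program $p$ obtained in this way is actually total on $U$ and that the constant $c'$ is uniform in $k$ and $x$. I would handle totality by folding a constant-length time-bound template into the interpretation of $p$ by $U$, using the hypothesis $k+\log k\le n$ to make sure that $k$ is large enough to accommodate this template and the universal simulation prefix of $V$ on $U$. The uniformity of $c'$ then follows because the simulation constant of $V$ by $U$, the $O(\log\log k)$ prefix-free overhead in $\overline{k}$, and the $O(\log n)$ gap between $K(x)$ and $C(x)$ all appear additively in $|p|+|d|$ and can be charged against the slack $n-C(x)$ inside the significance, rather than being propagated into the length $|p|$ itself. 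This is the technically delicate part of the argument: one must check that every auxiliary cost incurred to force the structured $V$-program and to make $p$ total lands in the significance coordinate and not in the sophistication coordinate.
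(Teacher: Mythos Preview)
Your proposal has a genuine gap in two places. First, the ``slack'' accounting is wrong: the sophistication condition $|p|+|d|\le C(x)+c$ with the claimed significance $c=|x|-C(x)-\log k+c'$ simplifies to $|p|+|d|\le |x|-\log k+c'$. The $C(x)$ terms cancel, so there is no slack of size $n-C(x)$ into which anything can be absorbed; the target for $|p|+|d|$ is $n-\log k+O(1)$ with the constant independent of $x$ and~$k$. For a random $x$ with $K(x)\ge n$ your bound $|p|+|d|\le K(x)-\log k+O(\log\log k)$ already overshoots by $\Theta(\log n)$, so at best you would prove the lemma with $c'$ replaced by $O(\log n)$, which is too weak for the application in Theorem~\ref{th:ldepthVsSoph} (there one needs the significance to drop all the way to~$0$).

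Second, the claim that for every $k$ there is a $V$-program of the form $\overline{k}\,p\,d$ of length $K(x)+O(1)$ with the $O(1)$ uniform in $k$ is unjustified and in general false: a near-shortest program on your auxiliary machine $V$ will start with $\overline{k^*}$ for some $k^*$ depending on $x$, and forcing a prescribed $k$ can cost you the very $\log k$ bits you are trying to save. The vague ``time-bound template'' for totality does not address this.

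The paper's proof is elementary and bypasses complexity bounds entirely. It encodes the last $\log k-1$ bits of $x$ directly in the \emph{length} of $p$: if $i<k$ is the lexicographic index of $x_{n-\log k+2}\dots x_n$, let $p$ hardcode $x_1\dots x_i$ and, on input $d$, output $x_1\dots x_i\cdot d\cdot(\text{the $i$-th string})$. This function is total by construction, $|p|=i+O(1)\le k+O(1)$, and for $d=x_{i+1}\dots x_{n-\log k+1}$ one gets $U(p,d)=x$ with $|p|+|d|\le n-\log k+O(1)$. No prefix-free machinery, no uniformity issues, and totality is immediate.
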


Recall that sophistication is defined for negative significance. 
This lemma even proves that sophistication can be negative for all random 
strings, i.e., strings $x$ for which $C(x) \ge |x|$.

\begin{proof}
Let $n = |x|$. In order to prove the lemma it is sufficient to show that there is a two-part
description $(p,d)$ for $x$ satisfying $|p|+|d| \leq n - \log k + O(1)$ and $|p| \leq k + O(1)$. The
idea to prove it is to use the length of $|p|$ to encode the last $\log k-1$~bits of~$x$.

Let~$i$ be the index of the last $\log k-1$~bits of $x$ in the lexicographic order of strings; 
({\em i.e.}, $x_{n - \log k + 2}\dots x_n$ is the $i$-th string in the sequence $\varepsilon, 0, 1, 00,
01, \dots$). Notice that $i < k$.

 Let $p$ be the program that on input $d$ first prints $x_1\dots x_i$, subsequently prints $d$, and
 finally prints $x_{n-\log k+2}\dots x_n$. Clearly, the above description defines a total function.
 Moreover, only the information in $x_1\dots x_i$ is needed to evaluate this function, since the
 last part of the output can be computed from $i$.
Hence, we can construct $p$ such that $|p| = i + O(1) \leq k + O(1)$.

Furthermore, for $d = x_{i+1}\dots x_{n-\log k+1}$ we have $U(p,d) = x$ and 
  $|p| + |d| \leq (i + O(1)) + (n - i - \log k) \le n - \log k + O(1)$.  
  \qed
\end{proof}

\begin{proof}[of Theorem~\ref{th:sophUnstable}.]
  It is sufficient to show that for all $k,c$ there is a string $x$ of length $k + \log k + 2$  such
  that $\soph{c-O(\log c)}(x) \ge k$ and $\soph{c + O(1)}(x) \le k/8 + O(1)$.
   
  Our construction of $x$ implies that $C(x) \ge k-c$. 
  Hence, applying Lemma~\ref{lem:twopartShorter} with $k \leftarrow k/8$ implies $\soph{c + O(1)}(x)
  \le k/8 + O(1)$; indeed, the significance is
  \[ 
    |x| - C(x) - \log (k/8) + c' \le (k + \log k + 2) - (k - c) - \log k + 3 + c' = c + O(1).
  \]
  The inequality $\soph{c-O(\log c)}(x) \ge k-c$ follows by the requirements that
  $C(x) \le k-c + O(\log c)$ and that there exist no pairs $(p,d)$ such that 
  \begin{enumerate}
     \item $U(p,d) = x$ and $|p|+|d| < k$, \label{en:condX}
     \item $|p| < k-c$ and $U(p,y)$ is defined for all~$y$ 
       such that $|p|+|y| < k$.\label{en:condProg}
   \end{enumerate}

  \smallskip
  Let us summerize the properties needed in the construction of $x$ (of length $k + \log k + 2$).
  The complexity should be
  \[
   k - c \le C(x) \le k - c + O(\log c),
  \]
  and there should not exists pairs $(p,d)$ satisfying conditions~\ref{en:condX} and~\ref{en:condProg} above.
  
  \medskip

  {\em Construction of $x$.} 
  We keep a list of all strings of length $k + \log k + 2$.  At each stage we mark some strings and 
  the lexicographic first string without a mark is the current candidate for $x$.
  At each stage, marks are given as follows: we dovetail all programs $p$, and if a program of length less
  than $k-c$ halts with an output in the list, then that output is marked. Clearly, there are less
  than $2^{k-c}$ strings that are marked in this way. Secondly, if a program $p$ is found satisfying
  condition \ref{en:condProg}, {\em i.e.}, for which the computations $U(p,y)$ terminate for all $y$ such
  that $|y| + |p|<k$, then all strings $U(p,y)$ in the list are
  simultaneously marked. These marks appear in less than $2^{k-c}$ different stages, and the
  number of such marks is less than $\sum_{i=0}^k 2^i2^{k-i} < (k+1)2^{k}$.  Hence, the total number
  of marked strings is less than $(k+1)2^{k+1} \le 2^n$ which means there is always a candidate for~$x$ 
  and at some stage the new candidate remains permanent.  By construction, $C(x) \ge k-c$ and
  there is no pair $(p,d)$ for which both conditions \ref{en:condX} and \ref{en:condProg} are
  satisfied.

  Now we have to prove that $C(x) \le k - c + O(\log c)$.  $x$ can be computed from $k,c$ and 
  the total number $N$ of replacements of the candidate for $x$.
  Since there are  less than $2^{k-c} + 2^{k-c}$ stages where new marks are given, we have
  $N < 2^{k-c+1}$ times and hence $C(x|k,c) \le k-c + O(1)$.  
  In fact, if $N$ is represented in binary with $k-c$ bits, we can compute $k$ from $c$ and 
  the length of this representation. Thus $C(x|c) \le k-c + O(1)$ and hence $C(x) \le k-c + O(\log c)$.  
  \qed
\end{proof}

\section{Sophistication and Busy Beaver logical depth are not $O(1)$-close}
\label{sec:ldepthDiffersSoph}

In this section we investigate whether there exists an $O(1)$-close
relation between sophistication and logical depth.
More precisely, for every $c$ can we find an $e$ such that 
\[
  \soph{c+e}(x) \le \ld{c}^{bb}(x) + e \quad \text{  and  } \quad \ld{c+e}^{bb}(x) \le \soph{c}(x) + e?
\]
The following theorem provides a negative answer:

\begin{theorem}\label{th:ldepthDiffersSoph}
For all large $l$ there exist infinitely many strings $x$ such that
$\ld{l}^{bb}(x) \geq |x| - O(l)$ and $\soph{0}(x)\leq O(l^2 2^l)$.
\end{theorem}



We explain informally why an equivalence with $O(1)$ precision fails. 
In the definition of sophistication of $x$, we consider pairs of strings $(p,d)$
such that $U(p,d) = x$. As noted before, for all $k$ there are $2^k$ strings of
length $k$, but there are $(k+1)2^k$ pairs $(p,d)$ with $|p|+|d|=k$. This
suggests that strings might exist that have a two-part code $(p,d)$ for which
$|p|+|d|$ is smaller than~$C(x)$.  In other words, this suggests that
sophistication can be finite even for negative significance. 
For an explicit example, choose a string $x$ for which $C(x) \ge |x|$ and apply 
Lemma~\ref{lem:twopartShorter} in  Section~\ref{sec:sophunstable}. 
For all random and almost random strings, sophistication with negative
significance can still be small.  There exist $x$ that are only compressible by
a small amount and for which the logical depth is high for small significance. 
Such $x$ are almost random, and hence can have small sophistication even with
negative significance.

We now prove Theorem~\ref{th:ldepthDiffersSoph} by combining 
Lemma~\ref{lem:twopartShorter} in  Section~\ref{sec:sophunstable} with the following lemma.

\begin{lemma}\label{lem:controlLdepthAndC}
For some $c$, for all $d$ and for all $n>d$ there is $x$ of length $n$ such that
  \[\begin{array}{c}
     C(x) \geq n - d \,, \\[3mm]
     \ld{d - 2\log d - c}^{bb}(x) \geq n - d \,.
  \end{array}\]
\end{lemma}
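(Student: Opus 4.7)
Set $\Delta := d - 2\log d - c$ and let $T$ denote the maximum halting time of a halting program of length less than $n-d$, so $T = \max\{\halttime(q) : |q| < n-d,\ U(q) \text{ halts}\}$. Unfolding the definition of Busy Beaver logical depth, the condition $\ld{\Delta}^{bb}(x) \geq n-d$ is equivalent to the statement that every program $p$ with $U(p) = x$ and $|p| \leq C(x) + \Delta$ satisfies $\halttime(p) > T$. Call a length-$n$ string \emph{shallow} if it admits a \emph{fast near-shortest witness}, i.e., a program $p$ of length $\leq C(x) + \Delta$ with $U(p) = x$ and $\halttime(p) \leq T$; the goal is to exhibit a length-$n$ string that is both incompressible ($C(x) \geq n - d$) and not shallow.

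The plan is a counting argument: I would bound the number of length-$n$ strings failing either conclusion and show the total is strictly less than $2^n$. Strings with $C(x) < n - d$ number at most $2^{n-d}$, as each corresponds to a distinct program of length less than $n - d$. For shallow strings with $C(x) \geq n - d$, I would first narrow the complexity range: the literal print-program shows $\halttime$-bounded complexity satisfies $|p| \leq n + O(\log n)$ for $\halttime(p) = O(n) \leq T$, so any not-shallow $x$ (in particular a good one) is forced to lie in the window $C(x) \in [n - d,\ n - d + 2\log d + c + O(\log n)]$. Within this window I would attach to each shallow $x$ a fast witness $p$, observing that $|p| - C(x) \leq \Delta$ implies $2^{-C(x)} \leq 2^{\Delta + O(1)} \cdot 2^{-|p|}$. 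Passing to a prefix-free encoding (up to an $O(\log n)$ slack between $C$ and $K$), Kraft's inequality applied to the witnesses gives $\sum_{x \text{ shallow}} 2^{-C(x)} \leq 2^{\Delta + O(\log n)}$, which bounds the shallow count in the critical window.

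The main obstacle is that a naive enumeration of fast programs of length up to $C(x) + \Delta \leq n + \Delta + O(1)$ produces $O(2^{n+\Delta})$ candidate witnesses, which exceeds $2^n$. To overcome this, one must exploit two ingredients in concert: the fact that the window of $C(x)$-values compatible with a nontrivial depth gap has width only $O(\log d + \log n)$, and the fact that $T$ is already determined by $n - d$ halting bits, so the "short fast programs" that threaten a given $x$ can be enumerated using essentially $n - d$ bits of advice plus an $O(\log d)$ self-delimiting description of $d$. The $2\log d$ correction in $\Delta$ is exactly the room needed for this overhead, and I expect the argument to close by verifying, for $d$ sufficiently large, that $2^{n-d}$ (low-complexity count) plus the Kraft-based bound on shallow strings in the window remains strictly below $2^n$, producing the desired $x$.
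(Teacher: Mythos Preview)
Your counting framework has a genuine gap. The set of ``shallow'' strings (those with a fast near-shortest witness) is not small: as you observe, every $x$ with $C(x)$ above the window top is shallow via its print program, so almost all $n$-bit strings are shallow and you cannot bound their number below $2^n$. You therefore need to show that the narrow window $[n-d,\ n-d+2\log d+O(1)]$ contains a non-shallow string, but your Kraft step does not deliver this: from $\sum_{\text{shallow}}2^{-C(x)} \le 2^{\Delta+O(\log n)}$ and $C(x)\ge n-d$ you only get a count of $2^{n+O(\log n)}$, and you have no lower bound on how many strings lie in the window to compare against. Worse, your window width carries $O(\log n)$ terms (from the print program and from the $C$-versus-$K$ translation), whereas the slack in $\Delta$ is only $2\log d$; since the lemma must hold for all $n$ with $d$ fixed, no $O(\log n)$ loss is tolerable.

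The paper bypasses counting entirely with a direct construction: take $x$ to be the lexicographically first $n$-bit string not output within time $BB(n-d)$ by any program of length less than $n$ (the threshold is $n$, not $C(x)+\Delta$). Existence is trivial. The idea in your last paragraph---that $T$ plus $d$ suffice to enumerate the threatening outputs---is exactly right, but it is used to \emph{describe} $x$, yielding $C(x)\le (n-d)+2\log d+O(1)$, rather than to count. The lower bound $C(x)>n-d$ and the depth bound then follow immediately: any program for $x$ of length at most $C(x)+\Delta\le n-1$ runs longer than $BB(n-d)$ by construction, hence has length exceeding $n-d$.
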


\begin{proof}
We prove the existence of such strings for $n>d$, since for the other case is trivial. 

Let $x$ be the lexicographically first string of length $n$  which  is incompressible in time $BB(n - d)$, {\em i.e.} there is no program strictly shorter than $n$ computes $x$ in $BB(n - d)$ steps.

To show the inequalities in the statement of the lemma, it is sufficient to show that
 \[
   n - d < C(x) \leq n - d + 2\log d + O(1).
 \]
 For the right inequality, notice that we can compute $x$ from $BB(n -d)$ and $n$. Furthermore, with $O(1)$ bits of information, $n$ can be computed from $d$ and the length of a witnessing program for $BB(n - d)$ (notice that a program witnessing $BB(n-d)$ has length $n-d + O(1)$). 
 Hence $x$ has a program of length $n - d + 2\log d + O(1)$.
 
For the left inequality, notice that by the right inequality we have $C(x) < n$ for large~$d$. 
By choice of $x$, any program
producing $x$ of length at most $n-1$ must do it in time longer than $BB(n-d)$, and by definition of
$BB(n-d)$ this program must be strictly longer than $n-d$.  
\qed
\end{proof}

\begin{proof}[of Theorem \ref{th:ldepthVsSoph}]
  Let $c'$ be the constant from Lemma \ref{lem:twopartShorter}.
  For any large $k$ we apply  Lemma \ref{lem:controlLdepthAndC} with $d = \log k - c'$ to obtain a
  string $x$ of complexity $C(x) \ge |x|-\log k + c'$. Apply this bound to Lemma
  \ref{lem:twopartShorter}; the significance of the sophistication is at most  $|x| - (|x|-\log k + c')
  - \log k + c' = 0$ and  we conclude that $\soph{0}(x) \le k+c' \le O(2^d)$.

  At the same time $x$ satisfies $\ld{d - 2\log d -c}(x) \ge |x|-d$. Hence setting  $l = d -
  2\log d -c$ the equations of the Theorem \ref{th:ldepthVsSoph} are satisfied. 
  Since $k$ can be any large number, also $d$ and $l$ can be any large number, completing the proof.
  \qed
\end{proof}

\bigskip

If sophistication can be finite for negative significance, it would be 
fair to compare $\ld{O(1)}(x)$ to $\soph{\ell}^{bb}(x)$ where $\ell$ equals the minimal
value of the significance for which sophistication is finite.  
This value is $-\log C(x) + O(1)$ for every~$x$.
The following lemma implies that even with such a correction we can not have a 
correspondence with sublogarithmic terms in the significance.

\begin{lemma}\label{lem:negativeSignificance}
  There exists an $e$ such that for all $x$ and $c \ge 0$: $\soph{-2c-e}(x) \ge 2^c$. 
\end{lemma}

If $f$ is a sublogarithmic function, this lemma implies that $\soph{-\log |x| + f(|x|)}(x)$ is
at least proportional to  $\sqrt{|x|}$ for large $x$. (And is finite for $x$ such that $C(x) \ge |x|$.)
On the other hand, $\ld{O(1)}(x) \le bb(|x|+O(1))$ for all random $x$.
Hence, this approach does also not provide a close correspondence between depth and
sophistication.

\begin{proof}[of  Lemma~\ref{lem:negativeSignificance}]
  Let $e$ be a large enough constant that will be determined later.
  Suppose that $\soph{-2c-e}(x) < 2^c$ for some $x$ and $c \ge 0$. Let $p$ and
  $d$ be such that $U(p,d) = x$ with $|p| < 2^c$ and \[|pd| \le C(x) - 2c - e.\]
  Let $\ovl{p}$ be a self-delimiting encoding of $p$ of length at most 
  $2\log |p| + |p| \le 2c + |p|$. This code can be concatenated to $d$ to get a
  program for $x$ and this implies that $C(x) < 2c + |p| + |d| + e$ for some
  large enough $e$. By assumption on $|pd|$ this implies $C(x) < C(x)$, a
  contradiction. \qed 
\end{proof}

To study the relationship between depth and sophistication with more precision,
one can avoid the pathology of two part codes by using self-delimiting programs for
the total functions. 
Such programs can be concatenated with an argument without blank between both strings. 
This implies that one also needs to use self-delimiting programs 
for $x$, or otherwise again pathological examples can be constructed.  
More formally, one uses prefix-free Turing machines, which are
machines for which the set of halting programs is a prefix-free set.
There exists a universal such machine and we denote Kolmogorov complexity, sophistication and Busy
Beaver logical depth of $x$ on such a machine as $K(x)$, $\soph{c}^K(x)$ and $\ld{c}^K(x)$.  
It was shown~\cite[Theorem 3.2.2]{bruno} that with these definitions
sophistication and logical depth are still not $O(1)$-close.\footnote{
  The formulation in \cite[Theorem 3.2.2]{bruno} uses $I(x;H) = K(x) - K^H(x)$ with $K^H(x)$
  the Kolmogorov complexity on a machine that has an oracle for the Halting problem. 
  To obtain Theorem~\ref{th:phdBauwens} from this, use the 
  folklore result: $\ld{0}(x) \ge I(x;H) + O(\log I(x;H))$.
  }

\begin{theorem}[\cite{bruno}]\label{th:phdBauwens}
 For all $c$ and $e$ there exist infinitely many $x$ such that
 \[
 \soph{c}^K(x) \ge \left(\ld{0}^K(x)\right)^e.
 \]
 For all $c$ there exist $\varepsilon > 0$ and infinitely many $x$ such that
 \[
 \soph{c}^K(x) \ge \varepsilon|x| + \ld{0}^K(x).
 \]
\end{theorem}

\section*{Acknowledgments}
The authors are grateful to the anonymous reviewers and to
to Alexander Shen and 
Nikolay Vereshchagin for useful comments and discussions.\\
This work was partially supported by the national science foundation: Funda\c{c}\~{a}o para a Ci\^{e}ncia e Tecnologia, through the scholarships SFRH/BPD/76231/2011,
SFRH/BPD/75129/2010 and SFRH/BD/33234/2007, through the project $CSI^2$
with the reference PTDC/EIA-CCO/099951/2008 and grants of Instituto de Telecomunica\c{c}\~oes. The second author was also supported by NAFIT ANR-08-EMER-008-01 project.


\bibliographystyle{spmpsci}
\bibliography{bibtex,kolmogorov}

\appendix
\section{Machine invariance of logical depth}
\label{sec:bbdepthInvariant}

\begin{lemma}\label{lem:depthIsInvariant}
  For all universal
  Turing machines $U$ and $V$,
  there exist a constant~$c'$ 
  such that for all $c$ and $x$: $\ld{c,U}(x) \ge |x|$ [no Busy Beaver here!] implies
  \[\ld{c+c',V}^{bb}(x) \le \ld{c,U}^{bb}(x) + c'.\]
\end{lemma}

Note that for some universal machines there exist a
string $w$ such that $U(wx) = x$ for all $x$ and the computation requires at most $O(1)$ steps.
For such machines $U$ we have $\ld{|wx|}(x) \le O(1)$ and hence $\ld{|wx|}^{bb}(x) \le O(1)$. 
Other universal machines always scan the
input, and on such machines we have $\ld{|wx|}^{bb}(x) \ge bb(|x|) - O(1)$ for all $x$.
Hence, the assumption in the lemma is necessary.

\begin{proof}
  Let $w_V$ be the prefix such that  $V(w_Vp)$ simulates $U(p)$ for all~$p$. 
  Our result would follow easily if we assume that for any halting programs $p,q$ on $U$ such that
  $\halttime(p) \le \halttime(q)$ we have $\halttime(w_Vp) \le \halttime(w_Vq)$ on~$V$;  {\em i.e.}
  simulating $U$ on $V$ preserves the order of computation time.  
  Indeed, any pair~$(p,q)$ usable in the definition of depth on $U$ defines a pair~$(w_Vp, w_Vq)$ that can be used in the definition of depth on~$V$. 
  The program $w_Vp$ is minimal on~$V$ within $c + |w_V| + |w_U|$ error 
  (where $w_U$ is the string that allows to simulate~$U$ on~$V$).
  Hence, the pair~$(w_Vp, w_Vq)$ witnesses an increase of sophistication by at most~$|w_V|$ 
  for an increase of the significance of at most $c + |w_V| + |w_U|$. 

  In the case where the assumption is not true, we need to find $c'$ and a program of length at most
  $|q|+c'$ on $V$ that computes longer than~$\halttime(w_Vp)$ (where $c'$ does not depend on
  $p,q,c$). 
  Consider the following algorithm on input $q$: determine all programs $p$ that have running time at most
  $\halttime(q)$ on~$U$,  determine for all these $p$'s the maximal running time
  $T$ of a program $w_Vp$ on $V$ (assume for now that for finite $\halttime(q)$ there are 
  finitely many such $p$), and finally print a string of length~$T$.  For $(p,q)$ 
  usable in the definition of $\ld{c,U}(x)$, the algorithm with input $q$
  produces an output longer than $\halttime(w_Vp)$, and by universality there is a program
  of length $|q|+c'$ on~$V$ that prints this string and hence computes longer than $T$. 

  Above, we have assumed that only finitely many programs on $U$ have 
  a halting time at most $\halttime(q)$ for halting $q$. This assumption is not true in general, but 
  by the additional assumption of the lemma: $\ld{c,U}(x) \ge |x|$, it suffices to consider only 
  a finite subset of candidates: we only need the pairs $(p,q)$ on $U$ such that
  $|p| \le |x| + O(1)$ and $|x| \le \halttime(q)$, which implies $|p| \le \halttime(q) + O(1)$.
  The proof finishes by modifying the above algorithm such that it only considers programs~$p$ 
  for which $|p| \le \halttime(q) + O(1)$.
  \qed
\end{proof}

Recall that Bennett's definition of logical depth is the minimal computation time of a 
program on a prefix-free machine $W$ (of some type) that is $c$-incompressible. 
We show that when scaled by the inverse Busy Beaver function, 
both notions of depth are $O(\log |x|)$-close. 
On a prefix-free machine~$W$, both (unscaled) depths are closely related:
Bennett's logical depth of $x$ at significance $c$ is at most $\ld{c+O(1),W}(x)$, 
because any $c$-shortest program $p$ for $x$ is $c+O(1)$-incompressible on $W$. 
On the other hand, by~\cite[Lemma 5.3]{DepthAndReducibility} (attributed to Bennett~\cite{ben88}),
$\ld{c+O(1)}(x)$ is bounded by a computable function of Bennett's logical depth of $x$ with
significance $c$.  Hence, after rescaling with the inverse Busy Beaver function, both notions are
$O(1)$-close. Exchanging prefix-free machine by a plain machine, both depth notions are $O(\log
|x|)$-close; indeed this follows by the same argument as Lemma~\ref{lem:depthIsInvariant} for $W=V$
and replacing $|w_V|$ by $O(\log |x|)$-terms in the proof (since $|K_W(x)-C_U(x)| \le O(\log |x|)$).

\section{Alternative proof of~Theorem~\ref{th:ldepthVsSoph}}
\label{sec:alternativeRelating}

An alternative proof for the second inequality in  Theorem~\ref{th:ldepthVsSoph} is given: 
there exists $e$ such that for all $c$ and $x$ with $|x| \ge e$ we have
\[
    \soph{c + e\log |x|}(x) \le \ld{c}^{bb}(x) + e\log |x|\,.
\]

A {\em prefix stable} machine $V$ is a plain machine 
such that for all strings $p$ and extensions $q$ of $p$: if $p \in \Dom V$ then $q \in \Dom V$ and
$V(p) = V(q)$. For (infinite) sequences $\omega$ let $V(\omega)$ be $V(p)$ if a prefix $p$ of
$\omega$ exists such that $V(p)$ is defined, and undefined otherwise.
For any string or sequence $\omega$, let $0.\omega$ be the real $\sum_i \omega_i 2^{-i}$. 
A prefix stable machine is  {\em left computable}~\cite{EpsteinLevin} 
if for $p$ such that $V(p)$ is defined and for all $q$ such that $0.q \le 0.p$, also $V(q)$ is defined. There are
universal prefix stable machines that are left computable (just rearrange the programs on a
universal machine).
Let $\Omega = \sup\{0.p: V(p) \textnormal{is defined} \}$.

In order to prove the result aforementioned, it is sufficient to show that $\soph{c + 2\log |x|,U}(x) \le \ld{c,W}^{bb}(x) + 2\log |x|$ for large $x$, 
where $W$ is a universal left computable machine. 
Indeed, there exists a universal plain machine $U$ such that 
\[
\ld{c + 2\log |x|,W}^{bb}(x) \le \ld{c,U}^{bb}(x) + O(\log |x|).
\]
(translating plain programs to self-delimiting ones can happen by affecting 
program sizes by at most $O(\log |p|)$ and computation time by a computable function of $|p|$ and
the halting time).

Let $p$ be a program satisfying the conditions in the definition of $\ld{c}^{bb}(x)$.
We show that the initial segment where $p$ and $\Omega$ 
are equal defines a computable function that satisfies the conditions in the definition of sophistication.
More precisely, let $i$ be the length of the common initial segment, then 
$F(d) = V(\Omega_1\dots \Omega_i0d)$ satisfies the conditions.
Note that, $p_1\dots p_i = \Omega_1\dots \Omega_{i-1}0$ and $\Omega_i = 1$ by construction of~$i$.
Thus $F(p_{i+2}\dots p_{|p|}) = x$.
 For any $d$ 
we have $0.\Omega_1\dots \Omega_i0d < \Omega$ and by left computability this is in $\Dom V$, thus $F$ is computable.
It remains to show that $C(F) \le \ld{c,W}(x)+O(\log \ld{c,W}(x))$. 
We show that $C(\Omega_1\dots \Omega_{i-1}) \le \ld{c,W}(x) + O(\log i)$.
In fact, given $i$ and a $t$ that exceeds the computation time of $p$, we can search for the maximal
value $0.w$ for a program $w$ that halts in $t$ computation steps. 
We know that $0.p \le 0.w \le \Omega$, hence we can compute the first $i-1$ bits of $\Omega$ which completes the proof.

\end{document}